\begin{document}
\mainmatter              
\title{Trace decreasing quantum dynamical maps: Divisibility and entanglement dynamics}
\titlerunning{Trace decreasing quantum dynamical maps}  
\toctitle{Trace decreasing quantum dynamical maps: Divisibility
and entanglement dynamics}
\author{Sergey N. Filippov\inst{1,2}}
\authorrunning{Sergey N. Filippov} 
%
\tocauthor{Sergey N. Filippov}
\institute{Steklov Mathematical Institute of Russian Academy of
Sciences, Gubkina Street 8, Moscow 119991, Russia \and Valiev
Institute of Physics and Technology of Russian Academy of
Sciences, Nakhimovskii Prospect 34, Moscow 117218, Russia}

\maketitle              

\begin{abstract}
Trace decreasing quantum operations naturally emerge in
experiments involving postselection. However, the experiments
usually focus on dynamics of the conditional output states as if
the dynamics were trace preserving. Here we show that this
approach leads to incorrect conclusions about the dynamics
divisibility, namely, one can observe an increase in the trace
distance or the system-ancilla entanglement although the trace
decreasing dynamics is completely positive divisible. We propose
solutions to that problem and introduce proper indicators of the
information backflow and the indivisibility. We also review a
recently introduced concept of the generalized erasure dynamics
that includes more experimental data in the dynamics description.
The ideas are illustrated by explicit physical examples of
polarization dependent losses.
\keywords{quantum operation, postselection, polarization dependent
losses, entanglement dynamics, divisibility}
\end{abstract}
\section{Introduction}
Many quantum physics experiments involve a postselection, i.e., an
analysis of only those events that are conditioned by some
measurement outcomes. Examples include the approximate preparation
of a single-photon state via detecting a heralding photon in the
parametric down-conversion process~\cite{uren-2004}, the
implementation of the conditional quantum
gates~\cite{kiesel-2005}, the generation of high-dimensional
maximally entangled orbital-angular-momentum
states~\cite{kovlakov-2018}, the conditional subtraction of a
single photon~\cite{wenger-2004} or multiple
photons~\cite{bogdanov-2017} from the electromagnetic field, the
conditional addition of a single photon to the electromagnetic
field~\cite{zavatta-2004}, the experimental evaluation of
so-called weak values~\cite{pryde-2005}, and the simulation of
quantum collision models~\cite{mataloni-2019}. A proper
mathematical description of any postselection procedure is given
by a quantum operation, i.e., a trace nonincreasing completely
positive map $\Lambda$ (see, e.g.,~\cite{kraus-1983}) that is
associated with a specific measurement outcome. If $\varrho$ is an
initial density operator before postselection, then
$\Lambda[\varrho]$ is a subnormalized density operator such that
the trace ${\rm tr}\big[ \Lambda[\varrho] \big]$ is exactly the
probability to get that specific measurement outcome.

In general, a measurement-induced transformation of the system
state is fully described in terms of the quantum instrument that
assigns a quantum operation to each measurement
outcome~\cite{davies-lewis-1970,holevo-2012,heinosaari-ziman}. A
quantum operation $\Lambda$ corresponds to a specific measurement
outcome, say, a successful detection of the heralding photon in
quantum optics
experiments~\cite{uren-2004,kiesel-2005,wenger-2004,zavatta-2004}.
In mathematical description of experiments with photon subtraction
or photon addition~\cite{wenger-2004,zavatta-2004}, however, the
conventionally used transformations $\varrho \rightarrow t a
\varrho a^{\dag}$ and $\varrho \rightarrow t a^{\dag} \varrho a$
are not legitimate quantum operations for any constant $t>0$ due
to the violation of the trace nonincreasing
property~\cite{filippov-ljm-2019} ($a$ and $a^{\dag}$ are the
photon annihilation operator and the photon creation operator for
a considered mode, respectively). The transformations above
approximate the faithful quantum operations of photon substraction
and addition for density operators from some
class~\cite{filippov-ljm-2019}.

To illustrate a simple trace nonincreasing operation, let us
consider an experiment with polarization qubits. Suppose $\varrho$
is a density operator for polarization degrees of freedom of
single photons and those single photons propagate through a lossy
optical fiber with the intensity attenuation factor $p$, then
$\Lambda[\varrho] = p \varrho$ and ${\rm tr}\big[ \Lambda[\varrho]
\big] = p$. The attenuation factor is experimentally estimated as
$N_{\rm out}/N_{\rm in}$, where $N_{\rm out}$ is the number of
photons detected at the output of the fiber and $N_{\rm in}$ is
the number of photons that enter the fiber. If the photon detector
has a finite quantum efficiency $\eta$, then the number of
successfully detected photons at the input (the output) of the
fiber approximately equals $N_{\rm in}' = \eta N_{\rm in}$
($N_{\rm out}' = \eta N_{\rm out}$) so that $N_{\rm out}' / N_{\rm
in}' = N_{\rm out}/N_{\rm in}$, i.e., ${\rm tr}\big[
\Lambda[\varrho] \big]$ can be correctly estimated even with the
use of imperfect devices. Alternatively, the photon pairs can be
produced in the parametric down-conversion process, with one
photon being sent to a lossy channel (to count $N_{\rm out}'$ at
the output) and the other one being used as a herald (to count
$N_{\rm in}'$). These ideas enable one to reconstruct a trace
nonincreasing operation based on experimental data. For instance,
the experimental quantum process tomography of a trace decreasing
map $\Lambda$ describing the partially transmitting polarizing
beam splitter is reported in Ref.~\cite{mataloni-2010}.

The whole idea of postselection is focused on the conditional
output state
\begin{equation*}
\varrho_{\rm out} = \frac{\Lambda[\varrho]}{{\rm tr}\big[
\Lambda[\varrho] \big]} \equiv \Lambda_{{\cal D}}[\varrho]
\end{equation*}

\noindent that is a valid density operator provided ${\rm tr}\big[
\Lambda[\varrho] \big] > 0$. (If ${\rm tr}\big[ \Lambda[\varrho]
\big] = 0$, then there is no sense in postselection because a
desired measurement outcome is never observed.) The introduced map
$\Lambda_{{\cal D}}$ defines a nonlinear transformation of density
operators that finds applications in the stroboscopic
implementation of effective non-Hermitian
Hamiltonians~\cite{luchnikov-2017,grimaudo-2020}. In this work, we
are primarily interested in biased operations $\Lambda$ that have
the property ${\rm tr} \big[ \Lambda[\varrho_1] \big] \neq {\rm
tr} \big[ \Lambda[\varrho_2] \big]$ for at least two density
operators $\varrho_1$ and $\varrho_2$~\cite{filippov-2021}.
Examples of biased quantum operations include the partially
transmitting polarizing beam splitter~\cite{mataloni-2010} and the
polarization-dependent losses~\cite{gisin-1997}.

Interestingly, the optical simulator of quantum
collisions~\cite{mataloni-2019} produces a sequence of quantum
operations $\Lambda(t_m)$ in the single-photon sector that
correspond to transformations from time $t=0$ to a number of
different time moments $t=t_m$, $m=1,2,\ldots$. Similarly, one can
consider losses in optical fibers of different length to get a
one-parameter family of quantum operations $\{\Lambda(t)\}_{t \geq
0}$ that we refer to as a trace decreasing dynamical map if ${\rm
tr}\big[ \Lambda(t)[\varrho] \big] < 1$ for some density operator
$\varrho$ and some $t \geq 0$. Note that the relation ${\rm tr}
\big[ \Lambda(t_1)[\varrho] \big] \geq {\rm tr} \big[
\Lambda(t_2)[\varrho] \big]$ for $t_2 \geq t_1 \geq 0$ need not
hold in general. Suppose $\Lambda(t_2) = \Theta(t_2,t_1)
\Lambda(t_1)$ for all $t_2 \geq t_1 \geq 0$, where
$\Theta(t_2,t_1)$ is a legitimate quantum operation (i.e., a
completely positive and trace nonincreasing map). Then
$\Lambda(t)$ is called completely positive divisible
(CP-divisible). Physical meaning of CP-divisibility is that the
dynamics $\Lambda(t)$ can be effectively viewed as a concatenation
of independent subevolutions.

As far as trace preserving dynamical maps $\{\Phi(t)\}_{t \geq 0}$
are concerned, CP-divisibility is studied in a number of papers
and many different indicators of CP-indivisibility are proposed
too (see,
e.g.,~\cite{rivas-2014,chruscinski-2018,rivas-2010,chruscinski-2017,fc-2018,blp-2009}).
However, in the following Secs. we show that these indicators
become misleading if carelessly applied to postselected states of
a trace decreasing dynamical map $\{\Lambda(t)\}_{t \geq 0}$. Note
that some authors associate CP-divisibility with a quantum version
of the Markovian process~\cite{rivas-2014}; however, a more
complicated relation takes place operationally~\cite{milz-2019}.

\section{Trace distance approach to non-Markovianity}
\label{section-distinguish}

Let $\varrho_1$ and $\varrho_2$ be density operators, then the
trace distance $\frac{1}{2} \| \Phi(t) [\varrho_1] - \Phi(t)
[\varrho_2] \|_1$ is a nonincreasing function of time $t$ if the
trace preserving dynamical map $\{\Phi(t)\}_{t \geq 0}$ is
CP-divisible~\cite{blp-2009}. The physical meaning of the trace
distance is related with the maximum success probability to
distinguish the initially equiprobable states $\varrho_1$ and
$\varrho_2$ after time $t$. The maximum success probability equals
$\frac{1}{2} + \frac{1}{4} \| \Phi(t) [\varrho_1] - \Phi(t)
[\varrho_2] \|_1$ in this case (see,
e.g.,~\cite{heinosaari-ziman}). If the trace distance diminishes,
then the success probability diminishes too, which is treated as a
flow of information from the open system to its environment. On
the other hand, the increase of the trace distance is an
indication of the backflow of information according to
Ref.~\cite{blp-2009}.

If one experimentally reconstructs the conditional output states
$\Lambda_{\cal D}(t)[\varrho_1] = \Lambda(t)[\varrho_1] / {\rm
tr}\big[ \Lambda(t) [\varrho_1] \big]$ and $\Lambda_{\cal
D}(t)[\varrho_2] = \Lambda(t)[\varrho_2] / {\rm tr}\big[
\Lambda(t) [\varrho_2] \big]$, then it is tempting to use the
conventional trace distance
\begin{equation} \label{td-naive}
\frac{1}{2} \| \Lambda_{\cal D}(t)[\varrho_1] - \Lambda_{\cal
D}(t)[\varrho_2] \|_1
\end{equation}

\noindent to analyze the information flow. However, this approach
encounters a number of problems. First, the probabilities
$p_1(t):={\rm tr}\big[ \Lambda(t) [\varrho_1] \big]$ and
$p_2(t):={\rm tr}\big[ \Lambda(t) [\varrho_2] \big]$ differ in
general. This means that by using Eq.~\eqref{td-naive} an
experimentalist disregards some extra information on how often the
states $\Lambda_{\cal D}(t)[\varrho_1]$ and $\Lambda_{\cal
D}(t)[\varrho_2]$ are actually produced. Second, as we show in the
example below, the quantity~\eqref{td-naive} may increase even for
a semigroup trace decreasing dynamics $\Lambda(t)$ that is
obviously CP-divisible.

\begin{example} \label{example-distinguishability-naive}
Consider the polarization qubit dynamics $\Lambda(t) = e^{L t}$,
where
\begin{equation} \label{L-example}
L[\varrho] = - \frac{1}{2} \Big\{ \gamma_H \ket{H}\bra{H} +
\gamma_V \ket{V}\bra{V},\varrho \Big\},
\end{equation}

\noindent $\{ \cdot, \cdot \}$ stands for the anticommutator;
$(\ket{H},\ket{V})$ is a conventional basis composed of the
horizontally and vertically polarized states; $\gamma_H$ and
$\gamma_V$ are the attenuation rates for horizontally and
vertically polarized photons, respectively. Suppose $\varrho_1 =
\ket{H}\bra{H}$ and $\varrho_2 =
\frac{1}{2}(\ket{H}+\ket{V})(\bra{H}+\bra{V})$, then we explicitly
find the subnormalized states in the basis $(\ket{H},\ket{V})$,
\begin{equation*}
\Lambda(t)[\varrho_1] = \left(%
\begin{array}{cc}
  e^{-\gamma_H t} & 0 \\
  0 & 0 \\
\end{array}%
\right), \quad \Lambda(t)[\varrho_2] = \frac{1}{2} \left(%
\begin{array}{cc}
  e^{-\gamma_H t} & e^{-(\gamma_H + \gamma_V) t / 2} \\
  e^{-(\gamma_H + \gamma_V) t / 2} & e^{-\gamma_V t} \\
\end{array}%
\right),
\end{equation*}

\noindent and the corresponding conditional output states,
\begin{equation*}
\Lambda_{\cal D}(t)[\varrho_1] = \left(%
\begin{array}{cc}
  1 & 0 \\
  0 & 0 \\
\end{array}%
\right), \ \Lambda_{\cal D}(t)[\varrho_2] = \frac{1}{e^{-\gamma_H t} + e^{-\gamma_V t}} \left(%
\begin{array}{cc}
  e^{-\gamma_H t} & e^{-(\gamma_H + \gamma_V) t / 2} \\
  e^{-(\gamma_H + \gamma_V) t / 2} & e^{-\gamma_V t} \\
\end{array}%
\right).
\end{equation*}

\noindent Therefore, $\frac{1}{2} \| \Lambda_{\cal
D}(t)[\varrho_1] - \Lambda_{\cal D}(t)[\varrho_2] \|_1 = \left[ 1
+ e^{(\gamma_V-\gamma_H)t} \right]^{-1/2}$, which monotonically
increases with time $t$ if $\gamma_H > \gamma_V$. \hfill
$\triangle$
\end{example}

Example~\ref{example-distinguishability-naive} shows that
Eq.~\eqref{td-naive} cannot be used in quantification of the
information flow for trace decreasing dynamical maps. An
alternative approach is to utilize the conditional probabilities
$\frac{p_1(t)}{p_1(t) + p_2(t)}$ and $\frac{p_2(t)}{p_1(t) +
p_2(t)}$ to get the postselected state $\Lambda_{\cal
D}(t)[\varrho_1]$ and $\Lambda_{\cal D}(t)[\varrho_2]$,
respectively, provided the desired quantum operation $\Lambda(t)$
is successfully fulfilled (for either of the input states,
$\varrho = \varrho_1$ or $\varrho=\varrho_2$). Then the modified
trace distance
\begin{equation} \label{td-naive-2}
 \left\| \frac{p_1(t)}{p_1(t)+p_2(t)} \Lambda_{\cal D}(t)[\varrho_1] -
\frac{p_2(t)}{p_1(t)+p_2(t)} \Lambda_{\cal D}(t)[\varrho_2]
\right\|_1
\end{equation}

\noindent is a relevant candidate to track the information flow.
However, as we reveal in the example below, the
quantity~\eqref{td-naive-2} is still unsatisfactory because it
does not monotonically decrease under the semigroup dynamics.

\begin{example} \label{example-naive-2}
Let $\Lambda(t) = e^{L t}$, where $L$ is given by
Eq.~\eqref{L-example}. Suppose $\varrho_1 = \ket{H}\bra{H}$ and
$\varrho_2 = \frac{1}{2}(\ket{H}+\ket{V})(\bra{H}+\bra{V})$, then
\begin{equation*}
\frac{\| p_1(t) \Lambda_{\cal D}(t)[\varrho_1] - p_2(t)
\Lambda_{\cal D}(t)[\varrho_2] \|_1}{p_1(t)+p_2(t)}  = \sqrt{ 1 -
\dfrac{8 e^{-2 \gamma_H t}}{(3 e^{- \gamma_H t} + e^{- \gamma_V
t})^2} } \, ,
\end{equation*}

\noindent which increases with the increase of time $t$ if
$\gamma_H > \gamma_V$. \hfill $\triangle$
\end{example}

The solution of the problem is to multiply the conditional success
probability of the discrimination task,
\begin{equation}
\frac{1}{2} \left\{ 1 + \frac{\| p_1(t) \Lambda_{\cal
D}(t)[\varrho_1] - p_2(t) \Lambda_{\cal D}(t)[\varrho_2]
\|_1}{p_1(t)+p_2(t)} \right\},
\end{equation}

\noindent by the average probability to implement the operation
$\Lambda(t)$, i.e., $\frac{1}{2}[p_1(t) + p_2(t)]$. Then we get
$\frac{1}{4} \left\{ p_1(t)+p_2(t) + \| p_1(t) \Lambda_{\cal
D}(t)[\varrho_1] - p_2(t) \Lambda_{\cal D}(t)[\varrho_2] \|_1
\right\}$. Note that $p_i(t) \Lambda_{\cal D}(t)[\varrho_i] =
\Lambda(t)[\varrho_i]$, $i=1,2$, so the obtained expression
reduces to
\begin{equation} \label{success-correct}
p_{\rm succ.dist.} = \frac{1}{4} \bigg( {\rm tr}\big[
\Lambda(t)[\varrho_1] \big] + {\rm tr}\big[ \Lambda(t)[\varrho_2]
\big] + \| \Lambda(t) [\varrho_1] - \Lambda(t) [\varrho_2] \|_1
\bigg),
\end{equation}

\noindent which is a nonincreasing function of time $t$ if
$\Lambda(t)$ is CP-divisible. Indeed, following the lines of
Ref.~\cite{ruskai-1994}, we readily see that the trace distance
$\frac{1}{2} \| \Lambda(t) [\varrho_1] - \Lambda(t) [\varrho_2]
\|_1$ remains a nonincreasing function of time $t$ if we extend
the result of Theorem 1 in Ref.~\cite{ruskai-1994} to the case of
trace decreasing completely positive maps. If $\Lambda(t)$ is CP
divisible, then the probabilities ${\rm tr}\big[
\Lambda(t)[\varrho_1] \big]$ and ${\rm tr}\big[
\Lambda(t)[\varrho_2] \big]$ to successfully implement the
operation $\Lambda(t)$ for the input states $\varrho_1$ and
$\varrho_2$, respectively, are nonincreasing functions of time $t$
too. Therefore, the information flow in a trace decreasing quantum
dynamics is to be associated with the change in
Eq.~\eqref{success-correct}, whereas the naive expressions
\eqref{td-naive} and \eqref{td-naive-2} should be avoided.

\begin{figure}
\begin{center}
\includegraphics[width=10cm]{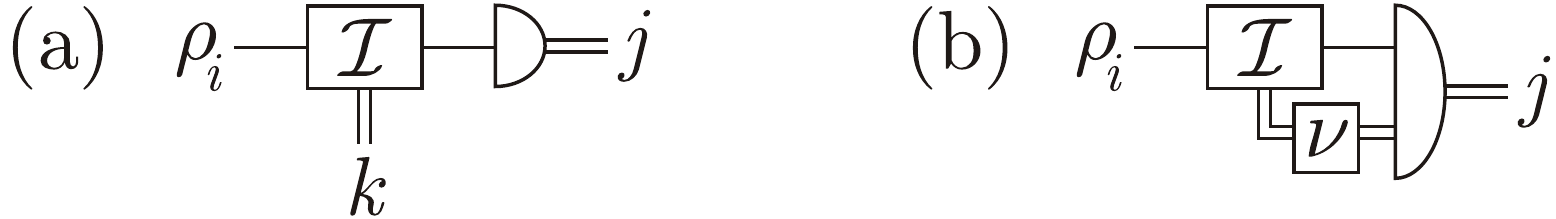}
\end{center}
\caption{{\bf (a)} Operational meaning of
Eq.~\eqref{success-correct}, where $\Lambda(t) = {\cal
I}_{k_{\ast}}$. {\bf (b)} Discrimination problem in the case of
the generalized erasure dynamics.} \label{figure-1}
\end{figure}

The operational meaning of Eq.~\eqref{success-correct} is depicted
in Fig.~\ref{figure-1}(a). Let $p(j,k|i)$ be a joint probability
to implement an operation ${\cal I}_k$ via a quantum instrument
${\cal I}$ and to observe the outcome $j$ while measuring the
output quantum state, provided the input state is $\varrho_i$.
Suppose ${\cal I}_{k_{\ast}} = \Lambda(t)$, then the maximum value
of $\frac{1}{2} \sum_{i=1,2} p(j=i,k=k_{\ast}|i)$ exactly equals
$p_{\rm succ.dist.}$ in Eq.~\eqref{success-correct}.

\section{System-ancilla entanglement dynamics}
\label{section-s-a-ent}

Suppose that in addition to the system of interest we also have
access to an ancillary system, which is isolated from the
decoherence sources. Provided the system dynamics is described by
a trace preserving dynamical map $\Phi(t)$, the total
system-ancilla aggregate undergoes an evolution given by the
dynamical map $\Phi(t) \otimes {\rm Id}$, see
Fig.~\ref{figure-2}(a).

\begin{figure}
\begin{center}
\includegraphics[width=10cm]{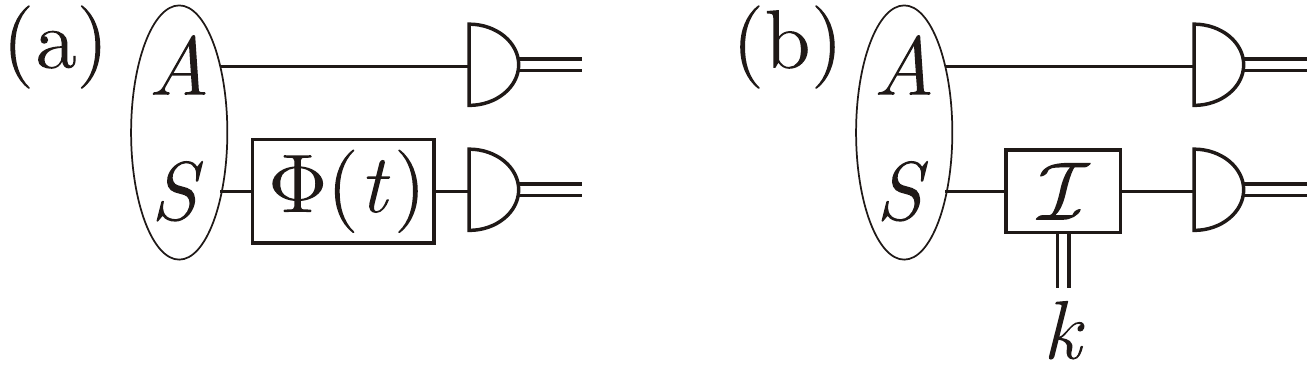}
\end{center}
\caption{{\bf (a)} Joint trace preserving dynamics of the system
($S$) and the ancilla ($A$). {\bf (b)} Schematic of an experiment
with trace decreasing dynamics, ${\cal I}_{k_{\ast}} =
\Lambda(t)$.} \label{figure-2}
\end{figure}

Suppose that the system and the ancilla are initially entangled,
then the system-ancilla entanglement generally changes in time. To
quantify the system-ancilla entanglement, we use some entanglement
monotone ${\cal E}$ that does not increase under trace preserving
local operations and classical communication~\cite{plenio-2007}. A
seminal observation of Ref.~\cite{rivas-2010} is that the
entanglement monotone ${\cal E}(t)$ is a nonincreasing function of
time $t$ provided the trace preserving map $\Phi(t)$ is
CP-divisible. Any increase in ${\cal E}(t)$ is an indication of
CP-indivisibility (non-Markovianity according to
Ref.~\cite{rivas-2014}). Although this approach to
CP-indivisibility detection is absolutely legitimate in the case
of trace preserving dynamical maps, it turns out to be wrong if
one tracks the entanglement dynamics of postselected
system-ancilla states [see setup in Fig.~\ref{figure-2}(b) with
postselection $k=k_{\ast}$]. The following example justifies this
claim.

\begin{example} \label{example-pdl-t}
Consider the polarization dependent losses that are time dependent
too, i.e., the master equation of the form
\begin{equation} \label{master-pdl}
\frac{d \varrho(t)}{dt} = - \frac{1}{2} \Big\{ \gamma_H(t)
\ket{H}\bra{H} + \gamma_V(t) \ket{V}\bra{V},\varrho(t) \Big\}.
\end{equation}

\noindent A solution of this master equation is
\begin{equation*}
\Lambda(t)[\varrho(0)] \equiv \varrho(t) = \left(%
\begin{array}{cc}
  e^{-\Gamma_H(t)} \varrho_{HH}(0) & e^{-\frac{1}{2}[\Gamma_H(t)+\Gamma_V(t)]} \varrho_{HV}(0) \\
  e^{-\frac{1}{2}[\Gamma_H(t)+\Gamma_V(t)]} \varrho_{VH}(0) & e^{-\Gamma_V(t)} \varrho_{VV}(0) \\
\end{array}%
\right) ,
\end{equation*}

\noindent where $\Gamma_H(t) = \int_0^t \gamma_H(t') dt'$ and
$\Gamma_V(t) = \int_0^t \gamma_V(t') dt'$. If $\gamma_H(t)$ and
$\gamma_V(t)$ are both nonnegative, then the trace decreasing
dynamical map $\Lambda(t)$ is CP-divisible.

Let us consider the specific rates
\begin{eqnarray}
&& \gamma_H(t) = \gamma \left( 1 - \frac{\omega \cos\omega
t}{\sqrt{\gamma^2 + \omega^2} + \gamma \sin\omega t} \right), \label{gamma-H-t}\\
&& \gamma_V(t) = \gamma \left( 1 + \frac{\omega \cos\omega
t}{\sqrt{\gamma^2 + \omega^2} - \gamma \sin\omega t} \right),
\label{gamma-V-t}
\end{eqnarray}

\noindent such that $\gamma_H(t) \geq 0$ and $\gamma_V(t) \geq 0$
if $\gamma > 0$, $\omega > 0$. Then
\begin{eqnarray}
&& p_H(t):= e^{-\Gamma_H(t)} = e^{- \gamma t} \left( 1 + \frac{\gamma}{\sqrt{\gamma^2 + \omega^2}} \sin\omega t \right), \label{pH-t}\\
&& p_V(t):= e^{-\Gamma_V(t)} = e^{- \gamma t} \left( 1 -
\frac{\gamma}{\sqrt{\gamma^2 + \omega^2}} \sin\omega t \right)
\label{pV-t}
\end{eqnarray}

\noindent are both nonincreasing functions of time $t$. The
physical meaning of $p_H(t)$ is the probability to successfully
detect the horizontally polarized photon at time $t$ provided the
photon was initially horizontally polarized. Eq.~\eqref{pV-t} has
a similar meaning for vertically polarized photons.
CP-divisibility of $\Lambda(t)$ means that the time evolution can
be represented as a sequence of independent subevolutions, with
each of them being a valid trace decreasing quantum operation. In
this example, all the subevolutions are polarization dependent
losses with different attenuation factors.

Let $\ket{\psi_+}\bra{\psi_+}$ be a maximally entangled initial
state of the qubit system and a two-dimensional ancilla. This is
exactly the case in the experimental scenario of
Ref.~\cite{mataloni-2019}, where the system-ancilla polarization
state of two photons is produced via the spontaneous parametric
down-conversion process. To be precise, $\ket{\psi_+} =
\frac{1}{\sqrt{2}}(\ket{H} \otimes \ket{H} + \ket{V} \otimes
\ket{V})$. The system-ancilla trace decreasing and CP-divisible
dynamics $\Lambda(t) \otimes {\rm Id} [\ket{\psi_+}\bra{\psi_+}]$
results in the following conditional dynamics:
\begin{eqnarray}
&& \frac{\Lambda(t) \otimes {\rm Id}
[\ket{\psi_+}\bra{\psi_+}]}{{\rm tr}\big[ \Lambda(t) \otimes {\rm
Id} [\ket{\psi_+}\bra{\psi_+}] \big]} =
\ket{\varphi(t)}\bra{\varphi(t)}, \\
&& \ket{\varphi(t)} = \frac{1}{\sqrt{p_H(t) + p_V(t)}} \left(
\sqrt{p_H(t)} \ket{H} \otimes \ket{H} + \sqrt{p_V(t)} \ket{V}
\otimes \ket{V} \right).
\end{eqnarray}

\noindent Note that the conditional system-ancilla state remains
pure for all times $t \geq 0$, so its entanglement is readily
quantified by an entanglement monotone called
concurrence~\cite{hill-wootters-1997}. We use that quantifier for
our results to be comparable with Ref.~\cite{mataloni-2019}, where
the dynamics of concurrence is analyzed by an experimental
reconstruction of the conditional system-ancilla states at
different time moments. In our example, we have
\begin{equation} \label{E-t-pdl}
{\cal E}(t) = \frac{2 \sqrt{p_H(t) p_V(t)}}{p_H(t) + p_V(t)} =
\sqrt{\frac{\gamma^2 \cos^2\omega t + \omega^2}{\gamma^2 +
\omega^2}},
\end{equation}

\noindent which is clearly a nonmonotonic function of time if
$\gamma > 0$, $\omega > 0$. Note that $\max_{t\geq 0} {\cal E}(t)
= {\cal E}(\frac{\pi n}{\omega}) = 1$ and $\min_{t\geq 0} {\cal
E}(t) = {\cal E}(\frac{\pi}{2\omega} + \frac{\pi n}{\omega}) =
\frac{\omega}{\sqrt{\gamma^2 + \omega^2}}$, $n \in \mathbb{N}$, so
there are infinitely many time periods of the entanglement
increase (see the upper solid line in Fig.~\ref{figure-3}). The
minimum value $\min_{t\geq 0} {\cal E}(t)$ can be arbitrarily
close to 0 if $\omega \ll \gamma$. \hfill $\triangle$
\end{example}

\begin{figure}
\begin{center}
\includegraphics[width=10cm]{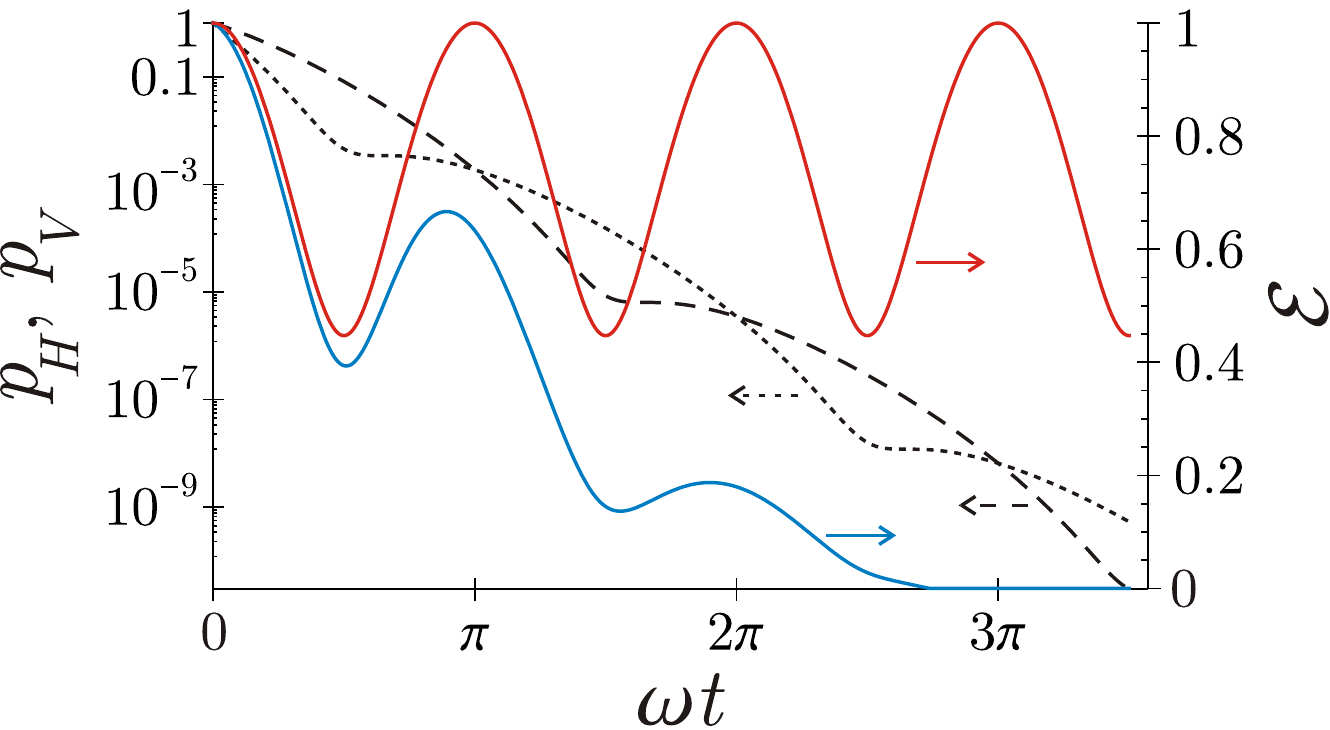}
\end{center}
\caption{Monotonic dynamics of the probability $p_H$ ({\it dashed
line}) and the probability $p_V$ ({\it dotted line}) in the trace
decreasing CP-divisible evolution governed by the master
equation~\eqref{master-pdl} for polarization dependent losses; the
rates $\gamma_H(t)$ and $\gamma_V(t)$ are given by
Eq.~\eqref{gamma-H-t} and Eq.~\eqref{gamma-V-t}, respectively,
where $\gamma = 2\omega$. Nonmonotonic dynamics of the concurrence
${\cal E}$ for the normalized system-ancilla state in
Example~\ref{example-pdl-t} with  $\gamma = 2\omega$ ({\it upper
solid line}), Eq.~\eqref{E-t-pdl}. Nonmonotonic dynamics of the
concurrence ${\cal E}$ for the normalized system-ancilla state in
Example~\ref{example-pdl-t-noise} with $\gamma = 2 \omega$ and the
depolarization rate $\lambda = 0.05 \omega$ ({\it lower solid
line}).} \label{figure-3}
\end{figure}

Therefore, the entanglement increase in the experimentally
reconstructed postselected system-ancilla state cannot be a valid
indicator of CP-indivisibility (non-Markovianity) for trace
decreasing dynamical maps. This fact was overlooked in
Ref.~\cite{mataloni-2019}.

The dynamical map $\Lambda(t)$ considered in
Example~\ref{example-pdl-t} has Kraus rank 1, which preserves the
purity of the postselected state. In the following example, we
consider a more sophisticated dynamics, where the conditional
system-ancilla state becomes mixed.

\begin{example} \label{example-pdl-t-noise}
Suppose the polarization dependent losses in
Example~\ref{example-pdl-t} are accompanied by a standard
depolarization with the rate $\lambda > 0$. Then the master
equation takes the form
\begin{equation} \label{master-pdl-depolarization}
\frac{d \varrho(t)}{dt} = - \frac{1}{2} \Big\{ \gamma_H(t)
\ket{H}\bra{H} + \gamma_V(t) \ket{V}\bra{V},\varrho(t) \Big\} +
\frac{\lambda}{4}\sum_{i=1}^3 \big( \sigma_i \varrho(t) \sigma_i -
\varrho(t) \big) ,
\end{equation}

\noindent where $\gamma_H(t)$ and $\gamma_V(t)$ are given by
Eqs.~\eqref{gamma-H-t} and \eqref{gamma-V-t}, respectively;
$(\sigma_1,\sigma_2,\sigma_3)$ is the conventional set of Pauli
operators, i.e., $\sigma_1 = \ket{H}\bra{V} + \ket{V}\bra{H}$,
$\sigma_2 = -i \ket{H}\bra{V} + i \ket{V}\bra{H}$, $\sigma_3 =
\ket{H}\bra{H} - \ket{V}\bra{V}$.
Eq.~\eqref{master-pdl-depolarization} defines a trace decreasing
CP-divisible dynamical map $\Lambda(t)$. The analytical expression
for the postselected state $\Lambda(t) \otimes {\rm Id}
[\ket{\psi_+}\bra{\psi_+}] / {\rm tr}\big[ \Lambda(t) \otimes {\rm
Id} [\ket{\psi_+}\bra{\psi_+}] \big]$ is rather involved though
straightforward. We depict the postselected state concurrence
${\cal E}(t)$ for the particular choice of parameters $\gamma$,
$\omega$, and $\lambda$ in Fig.~\ref{figure-3} (see the lower
solid line). The concurrence ${\cal E}(t)$ is a nonmonotonic
function of time $t$ despite the fact that $\Lambda(t)$ is
CP-divisible. \hfill $\triangle$
\end{example}

We can make another important observation in Fig.~\ref{figure-3}:
If the depolarization rate is strictly positive in
Example~\ref{example-pdl-t-noise} (i.e., $\lambda>0$), then the
concurrence ${\cal E}(t)$ vanishes at some time moment and remains
zero after. In what follows we prove that this is a general
phenomenon.

\begin{proposition}
Let $\Lambda(t)$ be a trace decreasing CP-divisible dynamical map
and $\varrho_{SA}(0)$ be an initial system-ancilla state. If the
postselected system-ancilla state $\Lambda(t_{\ast}) \otimes {\rm
Id} [\varrho_{SA}(0)] / {\rm tr}\big[ \Lambda(t_{\ast}) \otimes
{\rm Id} [\varrho_{SA}(0)] \big]$ becomes separable at time moment
$t_{\ast}$, then all future postselected states ($t \geq
t_{\ast}$) remain separable.
\end{proposition}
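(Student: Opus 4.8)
The plan is to reduce the claim to the elementary fact that local completely positive maps preserve separability, and then exploit CP-divisibility to write the later-time operators as local CP images of the earlier-time one. First I would observe that separability is a property of the unnormalized positive operator $\Lambda(t) \otimes {\rm Id} [\varrho_{SA}(0)]$ rather than of its normalized version: dividing by the positive scalar ${\rm tr}\big[ \Lambda(t) \otimes {\rm Id} [\varrho_{SA}(0)] \big]$ does not alter a decomposition into product operators. Hence it suffices to prove that if $\Lambda(t_{\ast}) \otimes {\rm Id} [\varrho_{SA}(0)]$ is separable, then $\Lambda(t) \otimes {\rm Id} [\varrho_{SA}(0)]$ is separable for every $t \geq t_{\ast}$ (restricting, of course, to times at which the trace is strictly positive, so that postselection is meaningful).

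Next I would invoke CP-divisibility. By definition there exists a completely positive and trace nonincreasing map $\Theta(t,t_{\ast})$ with $\Lambda(t) = \Theta(t,t_{\ast}) \Lambda(t_{\ast})$. Since tensoring with the identity respects composition, $(AB) \otimes {\rm Id} = (A \otimes {\rm Id})(B \otimes {\rm Id})$, this yields the factorization
\[
\Lambda(t) \otimes {\rm Id} [\varrho_{SA}(0)] = \big( \Theta(t,t_{\ast}) \otimes {\rm Id} \big) \big[ \Lambda(t_{\ast}) \otimes {\rm Id} [\varrho_{SA}(0)] \big].
\]
Thus the system-ancilla operator at the later time is obtained from that at $t_{\ast}$ by applying a completely positive map on the system side alone.

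The final step is the standard observation that a local completely positive (hence positive) map cannot generate entanglement. Writing the separable operator at $t_{\ast}$ as $\Lambda(t_{\ast}) \otimes {\rm Id} [\varrho_{SA}(0)] = \sum_i \alpha_i \otimes \beta_i$ with $\alpha_i \geq 0$ and $\beta_i \geq 0$, positivity of $\Theta(t,t_{\ast})$ gives
\[
\big( \Theta(t,t_{\ast}) \otimes {\rm Id} \big) \Big[ \sum_i \alpha_i \otimes \beta_i \Big] = \sum_i \Theta(t,t_{\ast})[\alpha_i] \otimes \beta_i,
\]
which is again a sum of products of positive operators and therefore separable. Renormalizing then shows that the postselected state stays separable for all $t \geq t_{\ast}$.

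As for obstacles, there is no real analytic difficulty; the argument is structural. The two points that require care are making explicit that separability is invariant under the positive renormalization — so that the nonlinearity of the postselection map $\Lambda_{\cal D}$ is harmless — and reading CP-divisibility correctly on the enlarged system-ancilla space, where the divisor $\Theta(t,t_{\ast})$ acts on the system factor only while the ancilla is left untouched. Both become routine once the problem is phrased in terms of unnormalized operators.
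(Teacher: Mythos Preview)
Your argument is correct and follows essentially the same route as the paper: invoke CP-divisibility to factor $\Lambda(t)=\Theta(t,t_{\ast})\Lambda(t_{\ast})$, apply $\Theta(t,t_{\ast})\otimes{\rm Id}$ term-by-term to a separable decomposition at $t_{\ast}$, and observe that the result is again separable. The only cosmetic difference is that the paper carries the normalization explicitly---writing the postselected state at $t$ as $\sum_i q_i\,\Theta(t,t_{\ast})[\varrho_S^{(i)}]/{\rm tr}\big[\Theta(t,t_{\ast})[\varrho_S^{(i)}]\big]\otimes\varrho_A^{(i)}$ with suitable weights $q_i$---whereas you defer renormalization to the end; the content is the same.
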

\begin{proof}
Since the postselected system-ancilla state is separable at time
$t_{\ast}$, we have $\Lambda(t_{\ast}) \otimes {\rm Id}
[\varrho_{SA}(0)] / {\rm tr}\big[ \Lambda(t_{\ast}) \otimes {\rm
Id} [\varrho_{SA}(0)] \big] = \sum_i p_i \varrho_S^{(i)} \otimes
\varrho_A^{(i)}$, where $\{p_i\}_i$ is a probability distribution,
$\{\varrho_S^{(i)}\}_i$ and $\{\varrho_A^{(i)}\}_i$ are the sets
of density operators for the system and the ancilla, respectively.
Let $t \geq t_{\ast}$. Since $\Lambda(t)$ is CP-divisible, then
there exists a map $\Theta(t,t_{\ast})$ such that $\Lambda(t) =
\Theta(t,t_{\ast}) \Lambda(t_{\ast})$. Hence,
\begin{eqnarray*}
&& \frac{\Lambda(t) \otimes {\rm Id} [\varrho_{SA}(0)]}{{\rm
tr}\big[ \Lambda(t) \otimes {\rm Id} [\varrho_{SA}(0)] \big]} =
\sum_{i} q_i \frac{\Theta(t,t_{\ast})[\varrho_S^{(i)}]}{{\rm tr}
\left[ \Theta(t,t_{\ast})[\varrho_S^{(i)}] \right]} \otimes
\varrho_A^{(i)}, \\
&& q_i = \frac{p_i {\rm tr} \left[
\Theta(t,t_{\ast})[\varrho_S^{(i)}] \right]}{\sum_j p_j {\rm tr}
\left[ \Theta(t,t_{\ast})[\varrho_S^{(j)}] \right]},
\end{eqnarray*}

\noindent i.e., the postselected state at time moment $t \geq
t_{\ast}$ is separable too. \hfill $\square$
\end{proof}

\begin{corollary} \label{corollary-1}
Suppose the system experiences a trace decreasing dynamics
$\Lambda(t)$. The system-ancilla entanglement death followed by
the entanglement revival is an indication of CP-indivisibility for
$\Lambda(t)$.
\end{corollary}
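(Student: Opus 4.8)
The plan is to obtain this corollary as the immediate contrapositive of the preceding Proposition, so the only real work lies in translating the physical phenomenon into the separability language used there. First I would make the phrase \emph{entanglement death followed by revival} precise. By entanglement death at a moment $t_{\ast}$ I mean that the postselected system-ancilla state $\Lambda(t_{\ast}) \otimes {\rm Id}[\varrho_{SA}(0)] / {\rm tr}\big[ \Lambda(t_{\ast}) \otimes {\rm Id}[\varrho_{SA}(0)] \big]$ is separable, i.e.\ the entanglement monotone satisfies ${\cal E}(t_{\ast}) = 0$; by revival I mean that at some later time $t' > t_{\ast}$ the postselected state is entangled again, ${\cal E}(t') > 0$.

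Next I would argue by contradiction. Suppose that, notwithstanding the observed death-then-revival pattern, the dynamical map $\Lambda(t)$ were CP-divisible. Since the postselected state is separable at $t_{\ast}$, the Proposition applies verbatim and guarantees that the postselected state stays separable for every $t \geq t_{\ast}$. In particular the postselected state at $t'$ would be separable, hence ${\cal E}(t') = 0$, contradicting the assumed revival ${\cal E}(t') > 0$. This contradiction rules out CP-divisibility, so the occurrence of entanglement death followed by revival forces $\Lambda(t)$ to be CP-indivisible, which is exactly the claim.

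The deduction is immediate once the Proposition is granted, so I do not anticipate a genuine obstacle in the logic itself. The one point I would treat with care is the equivalence between the two descriptions of death---vanishing of the chosen monotone ${\cal E}$ versus separability of the postselected state---because the Proposition is phrased in terms of separability while the observed quantity in Examples~\ref{example-pdl-t} and~\ref{example-pdl-t-noise} is the concurrence. For two-qubit states this causes no trouble: the concurrence is a faithful entanglement monotone and vanishes precisely on separable states, so ${\cal E}(t)=0$ is equivalent to separability of the postselected state. With this reading fixed the contrapositive closes, and no further estimates are needed.
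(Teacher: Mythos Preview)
Your proposal is correct and matches the paper's approach: the corollary is stated there without proof, as an immediate contrapositive of the preceding Proposition. Your extra remark about faithfulness of the concurrence on two-qubit states is a reasonable clarification but not something the paper spells out.
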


Using the results of Ref.~\cite{sperling-2011}, the above claims
can be reformulated and generalized in terms of the Schmidt rank
as follows: A CP-divisible trace decreasing system dynamics
$\Lambda(t)$ does not increase the Schmidt rank of the
system-environment normalized states $\Lambda(t) \otimes {\rm Id}
[\varrho_{SA}(0)] / {\rm tr}\big[ \Lambda(t) \otimes {\rm Id}
[\varrho_{SA}(0)] \big]$.

Clearly, the examples and the statements above hold true not only
for continuous time evolutions but also in the case of the
discrete time evolutions, e.g., collision models that are reviewed
in Ref.~\cite{campbell-2021}.

\section{Generalized erasure dynamics}

In Ref.~\cite{filippov-2021}, a generalized erasure channel has
been introduced as a physically motivated extension of a trace
decreasing operation. In fact, the successful implementation of a
quantum operation $\Lambda$ for an input state $\varrho$ happens
with the probability ${\rm tr}\big[ \Lambda[\varrho] \big]$, so
the failure probability equals
\begin{equation}
p_{\rm fail} = 1 - {\rm tr}\big[ \Lambda[\varrho] \big] = {\rm
tr}\big[ \varrho - \Lambda[\varrho] \big] = {\rm tr}\left[ \left(
I - \Lambda^{\dag}[I] \right) \varrho \right],
\end{equation}

\noindent where $\Lambda^{\dag}$ is a dual map with respect to
$\Lambda$ such that ${\rm tr}\big[ \Lambda[X] Y \big] = {\rm
tr}\big[ X \Lambda^{\dag}[Y] \big]$ for all trace-class operators
$X$ and bounded operators $Y$. The probability $p_{\rm fail}$ is
readily accessible in a physical experiment. For instance, in the
experimental setup of Ref.~\cite{mataloni-2019}, this probability
can be evaluated as $(N_A - N_S)/ N_S$, where $N_A$ is a number of
the detector clicks for the ancilla and $N_S$ is a number of the
detector clicks for the system, see Fig.~\ref{figure-2}(b).
Although this extra information is free, it is usually disregarded
in the postselection-oriented experiments. We argue that the
unsuccessful implementation of operation $\Lambda$ can be viewed
as an erasure event. In Fig.~\ref{figure-2}(b), this event is
associated with the click of the ancilla detector and no click of
the system detector. We extend the Hilbert space and include the
erasure flag state $\ket{e}$ into it so that the map
\begin{equation}
\Gamma_{\Lambda}[\varrho] = \Lambda[\varrho] \oplus {\rm tr}\left[
\left( I - \Lambda^{\dag}[I] \right) \varrho \right] \
\ket{e}\bra{e} \equiv \left(%
\begin{array}{cc}
  \Lambda[\varrho] & {\bf 0} \\
  {\bf 0}^{\top} & {\rm tr}\left[
\left( I - \Lambda^{\dag}[I] \right) \varrho \right] \\
\end{array}%
\right)
\end{equation}

\noindent is trace preserving and completely
positive~\cite{filippov-2021}. Therefore, any trace decreasing
dynamics $\Lambda(t)$ can be alternatively described by virtue of
the corresponding trace preserving dynamical map
$\Gamma_{\Lambda}(t) \equiv \Gamma_{\Lambda(t)}$. We refer to
$\Gamma_{\Lambda}(t)$ as the generalized erasure dynamics.

CP-indivisibility identification problems discussed in
Secs.~\ref{section-distinguish} and \ref{section-s-a-ent} were
caused by the postselection. Instead, if we use the generalized
erasure dynamics, then those problems are automatically resolved.

\begin{proposition} \label{prop-CPdiv-GEC}
Suppose $\Lambda(t)$ is a CP-divisible trace decreasing dynamical
map, then $\Gamma_{\Lambda}(t)$ is a CP-divisible trace preserving
dynamical map.
\end{proposition}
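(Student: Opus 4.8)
The plan is to exhibit an explicit completely positive trace preserving intermediate map $\widetilde{\Theta}(t_2,t_1)$ on the enlarged space $\mathcal{H}\oplus\mathbb{C}$ (system plus erasure flag $\ket{e}$) such that $\Gamma_{\Lambda}(t_2)=\widetilde{\Theta}(t_2,t_1)\circ\Gamma_{\Lambda}(t_1)$ for all $t_2\geq t_1\geq 0$. Since $\Lambda(t)$ is CP-divisible, I am handed a completely positive trace nonincreasing map $\Theta(t_2,t_1)$ with $\Lambda(t_2)=\Theta(t_2,t_1)\Lambda(t_1)$, and $\Gamma_{\Lambda}(t_1)$ already maps $\mathcal{H}$ into $\mathcal{H}\oplus\mathbb{C}$, so composing with a map on $\mathcal{H}\oplus\mathbb{C}$ reproduces the domain and codomain of $\Gamma_{\Lambda}(t_2)$. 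The natural candidate for $\widetilde{\Theta}(t_2,t_1)$ is the generalized erasure extension of $\Theta(t_2,t_1)$ that treats $\ket{e}$ as absorbing: it applies $\Theta(t_2,t_1)$ to the system block, dumps the trace it loses into $\ket{e}$, and leaves whatever population is already sitting in $\ket{e}$ untouched.

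Concretely, I would write $\Theta(t_2,t_1)[\cdot]=\sum_i E_i\cdot E_i^\dag$ with $\sum_i E_i^\dag E_i\leq I$, set $N=I-\sum_i E_i^\dag E_i\geq 0$ and decompose $N=\sum_j \ket{g_j}\bra{g_j}$, and then define $\widetilde{\Theta}(t_2,t_1)$ through the Kraus operators $\tilde{E}_i=E_i\oplus 0$ (act on the system, annihilate the flag), $\tilde{E}_{e,j}=\ket{e}\bra{g_j}$ (transfer the newly lost system weight into the flag), and $\tilde{E}_{\rm id}=\ket{e}\bra{e}$ (retain the existing flag weight). Computing $\sum\tilde{E}^\dag\tilde{E}$ yields $(\sum_i E_i^\dag E_i + N)\oplus 1 = I\oplus 1$, the identity on $\mathcal{H}\oplus\mathbb{C}$, so $\widetilde{\Theta}(t_2,t_1)$ is manifestly completely positive and trace preserving.

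The verification of the composition then splits into the two blocks of $\Gamma_{\Lambda}(t_1)[\varrho]$. On the system block, the $\tilde{E}_i$ reproduce $\Theta(t_2,t_1)\big[\Lambda(t_1)[\varrho]\big]=\Lambda(t_2)[\varrho]$, which is exactly the system block of $\Gamma_{\Lambda}(t_2)[\varrho]$. On the flag, the $\tilde{E}_{\rm id}$ term carries over the old failure weight $1-{\rm tr}\big[\Lambda(t_1)[\varrho]\big]$, while the $\tilde{E}_{e,j}$ terms add the newly lost weight ${\rm tr}\big[N\Lambda(t_1)[\varrho]\big]={\rm tr}\big[\Lambda(t_1)[\varrho]\big]-{\rm tr}\big[\Lambda(t_2)[\varrho]\big]$; these sum to $1-{\rm tr}\big[\Lambda(t_2)[\varrho]\big]$, the correct flag population of $\Gamma_{\Lambda}(t_2)[\varrho]$. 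Since each Kraus term sends the block-diagonal operator $\Gamma_{\Lambda}(t_1)[\varrho]$ to a block-diagonal operator, no system–flag coherences appear, and the two sides agree.

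I expect the only genuine obstacle to be the bookkeeping of the erasure flag, namely checking that the weight already registered as failed by time $t_1$ and the additional weight lost between $t_1$ and $t_2$ combine to give precisely $p_{\rm fail}(t_2)=p_{\rm fail}(t_1)+\big({\rm tr}[\Lambda(t_1)[\varrho]]-{\rm tr}[\Lambda(t_2)[\varrho]]\big)$. This is exactly where the identity-on-the-flag term $\tilde{E}_{\rm id}$ is indispensable: without it the intermediate map would fail to be trace preserving. Everything else follows from the given CP-divisibility of $\Lambda(t)$ and from the fact that the generalized erasure construction is completely positive by design.
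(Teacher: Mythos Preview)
Your proposal is correct and follows essentially the same approach as the paper: both construct the identical intermediate channel on $\mathcal{H}\oplus\mathbb{C}$ (the paper calls it $\Xi(t_2,t_1)$) that applies $\Theta(t_2,t_1)$ to the system block and routes the lost trace into the flag while keeping the existing flag population. The only difference is presentational---the paper writes $\Xi$ directly as a block map and invokes $\Theta(t_2,t_1)^{\dag}[I]\leq I$ for complete positivity, whereas you supply explicit Kraus operators; your $\widetilde{\Theta}(t_2,t_1)$ and the paper's $\Xi(t_2,t_1)$ coincide.
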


\begin{proof}
Let $t_2 \geq t_1 \geq 0$, then $\Lambda(t_2) = \Theta(t_2,t_1)
\Lambda(t_1)$, where $\Theta(t_2,t_1)$ is a quantum operation.
Consider the map
\begin{equation} \label{Xi-channel}
\Xi(t_2,t_1) \left[ \left(%
\begin{array}{cc}
  \varrho & \vdots \\
  \cdots & c \\
\end{array}%
\right) \right] = \left(%
\begin{array}{cc}
  \Theta(t_2,t_1) [\varrho] & {\bf 0} \\
  {\bf 0}^\top & c + {\rm tr}\left[\rho (I - \Theta(t_2,t_1)^{\dag}[I]) \right] \\
\end{array}%
\right),
\end{equation}

\noindent which is automatically trace preserving and, in
addition, completely positive because a quantum operation
$\Theta(t_2,t_1)$ is trace nonincreasing, i.e.,
$\Theta(t_2,t_1)^{\dag}[I] \leq I$. We have $\Gamma_{\Lambda}(t_2)
= \Xi(t_2,t_1) \Gamma_{\Lambda}(t_1)$, i.e., $\Gamma_{\Lambda}(t)$
is CP-divisible. \hfill $\square$
\end{proof}

Due to Proposition~\ref{prop-CPdiv-GEC}, the distinguishability of
equiprobable states $\Gamma_{\Lambda}(t)[\varrho_1]$ and
$\Gamma_{\Lambda}(t)[\varrho_2]$ monotonically decreases if
$\Lambda(t)$ is CP-divisible. Remarkably, the probability to
successfully distinguish these states, $\frac{1}{2} + \frac{1}{4}
\| \Gamma_{\Lambda}(t) [\varrho_1] - \Gamma_{\Lambda}(t)
[\varrho_2] \|_1$, exceeds the probability~\eqref{success-correct}
by $\frac{1}{2}\left( 1 - \min_{i=1,2} {\rm tr} \big[
\Lambda(t)[\varrho_{i}] \big] \right)$ because the modified
scenario takes into account the erasure events too [see
Fig.~\ref{figure-1}(b), where $\nu$ is a coarse-graining
postprocessing with two outcome events: $k=k_{\ast}$ and $k \neq
k_{\ast}$].

Similarly, Proposition~\ref{prop-CPdiv-GEC} implies that the
system-ancilla entanglement monotonically decreases during the
trace preserving dynamics $\Gamma_{\Lambda}(t) \otimes {\rm Id}$
if $\Lambda(t)$ is CP-divisible.

\section{Conclusions}

In the present work, we have paid attention to some simple yet
important questions of how to analyze the divisibility problem in
experiments that involve postselection. By a number of examples we
have demonstrated the complete inconsistency of some approaches
that operate with the postselected states as though they were
obtained as a result of a trace preserving dynamics. In
particular, none of the Eqs.~\eqref{td-naive} and
\eqref{td-naive-2} is adequate in the analysis of the information
backflow because either of these quantities can increase during a
CP-divisible dynamics. The original idea behind the information
backflow is properly described by Eq.~\eqref{success-correct} that
has a clear physical meaning. Similarly, we show that the increase
in the system-ancilla entanglement is not a valid indicator of
CP-indivisibility if the postselection takes place. Unfortunately,
these facts are sometimes overlooked in research papers, e.g., a
nonmonotonic concurrence behavior for the system-ancilla
postselected state was mistreated as an indicator of
non-Markovianity in the paper~\cite{mataloni-2019} (this paper has
an extra drawback related with non-unitarity of the
inter-environment collision that we do not discuss in this work).
We advocate that a correct indicator of non-Markovianity should be
related with an increase of the system-ancilla Schmidt rank, e.g.,
the revival of the entanglement after the entanglement death
(Corollary~\ref{corollary-1}). A general analysis of the two-qubit
entanglement dynamics under local trace decreasing maps can be
accomplished by following the lines of Ref.~\cite{ffk-2018} and
using the quantum Sinkhorn theorem. Finally, we have described the
framework of generalized erasure dynamics that takes into account
how often a desired trace decreasing operation is actually
implemented and how often the implementation fails. This extra
information is readily available in experiments though it is
rarely used. On the other hand, including this information into
the description allows us to return to the fold of trace
preserving maps. Those trace preserving maps inherit some
properties of the original operations, e.g., CP-divisibility
(Proposition~\ref{prop-CPdiv-GEC}). Moreover, the induced trace
preserving maps have an interesting property of superadditivity of
coherent information~\cite{filippov-2021}.

%
%

\end{document}